\colorlet{shadecolor}{yellow}
\newtheorem{Claim}{Claim}
\DeclareFontFamily{U}{mathx}{\hyphenchar\font45}
\DeclareFontShape{U}{mathx}{m}{n}{
      <5> <6> <7> <8> <9> <10>
      <10.95> <12> <14.4> <17.28> <20.74> <24.88>
      mathx10
      }{}
\DeclareSymbolFont{mathx}{U}{mathx}{m}{n}
\DeclareMathSymbol{\bigtimes}{1}{mathx}{"91}
\begin{document}
\title{Learning-Enabled Adaptive Voltage Protection Against Load Alteration Attacks On Smart Grids}
\author{
  \IEEEauthorblockN{Anjana B.\IEEEauthorrefmark{1}, Suman Maiti\IEEEauthorrefmark{1}, Sunandan Adhikary\IEEEauthorrefmark{1}, Soumyajit Dey\IEEEauthorrefmark{1}, Ashish R. Hota\IEEEauthorrefmark{2}}
  \IEEEauthorblockA{\IEEEauthorrefmark{1}Department of Computer Science, Indian Institute of Technology  Kharagpur\\ Email: anjanab@kgpian.iitkgp.ac.in, sumanmaiti99@kgpian.iitkgp.ac.in, mesunandan@kgpian.iitkgp.ac.in, soumya@cse.iitkgp.ac.in}
  \IEEEauthorblockA{\IEEEauthorrefmark{2}Department of Electrical Engineering,  Indian Institute of Technology  Kharagpur\\ Email: ahota@ee.iitkgp.ac.in}
}
\maketitle

\begin{abstract}
Smart grids are designed to efficiently handle variable power demands, especially for large loads, by real-time monitoring, distributed generation and distribution of electricity. However, the grid's distributed nature and the internet connectivity of large loads like Heating Ventilation, and Air Conditioning (HVAC) systems introduce vulnerabilities in the system that cyber-attackers can exploit, potentially leading to grid instability and blackouts. Traditional protection strategies,  primarily designed to handle transmission line faults are often inadequate against such threats, emphasising the need for enhanced grid security. In this work, we propose a Deep Reinforcement Learning (DRL)-based protection system that learns to differentiate any stealthy load alterations from normal grid operations and adaptively adjusts activation thresholds of the protection schemes. We train this adaptive protection scheme against an optimal and stealthy load alteration attack model that manipulates the power demands of HVACs at the most unstable grid buses to induce blackouts. We theoretically prove that the adaptive protection system trained in this competitive game setting can effectively mitigate any stealthy load alteration-based attack. To corroborate this, we also demonstrate the method's success in several real-world grid scenarios by implementing it in a hardware-in-loop setup.
\end{abstract}
%
%
\IEEEpeerreviewmaketitle
\section{Introduction}
\label{secIntro}


A significant portion of loads in modern smart grids are cyber-enabled Internet of Things (IoT)  based devices. Being resource-constrained, such devices are vulnerable to cyber attacks that can manipulate device operations. For example, consider attacks like \cite{soltan2018blackiot, yang2023resilient, shekari2022madiot, huang2019not} which resulted in strategic  power flow fluctuations in grid systems. 
 In today's world, Heating, Ventilation, and Air Conditioning (HVAC) systems  consume a significant amount of energy; accounting for $\sim$20\% of overall grid energy usage.
 Modern HVACs are often operated as cyber-enabled, i.e. they act as high-energy IoT loads.  A classic example of a cyber-attack targeting HVAC systems is reported in \cite{HVAC_access}, where attackers compromised HVAC operations via remote access to its web service, ultimately causing grid blackouts.

\par To prevent demand-supply imbalances in the grid caused by natural events, such as transmission line faults due to short circuits, smart grids employ protection mechanisms. These mechanisms shed loads \cite{huang2019not} and trip transmission lines and generators \cite{soltan2018blackiot} using relays and circuit breakers to prevent blackout conditions.  However, these mechanisms, primarily designed to handle natural faults, are ineffective against the previously discussed strategically crafted grid attacks \cite{HVAC_access}. The ineffectiveness of existing protection mechanisms can be attributed to two main reasons. \textit{(i)} Protection schemes activate 
after a constant delay from the occurrence of a power imbalance
due to their predefined current and voltage thresholds.
\textit{(ii)} They are not designed with cyber or physical level attacks in mind.
\par \textbf{Related Work:} There exist several research works that aim to design mitigation strategies for cyber-physical attacks by designing vulnerability modeling and defense simulation frameworks~\cite{koley2021catch,tan2016optimal}. 
The works in~\cite{ni2019multistage, cunningham2022deep} propose a {zero-sum security game} between the attacker and defender in power grids. 
While such analysis is standard for ensuring stability in power grids, these attack approaches are not practical.
%
It has been shown in \cite{shekari2022madiot, soltan2018blackiot} that LAAs are difficult to detect when launched by manipulating the load consumption of IoT devices by existing bad-data detectors.
%
The mitigation strategy in \cite{shekari2022madiot} lacked a theoretical foundation as it adjusted the activation delays of grid protection schemes by specific amounts based on a survey of grid operators. The works in \cite{umsonst2020nash, koley2021catch} propose defence techniques against advanced FDIAs that update their strategies to maintain stealth
while maintaining a minimum false positive rate (FPR). 
However, these are not directly applicable to complex distributed systems like smart grids due to the dynamic nature of grid operations, where defence mechanisms must account for non-linear power flow equations and time-varying power demand.
\par \textbf{Motivation and Problem Formulation:}
Consider a standard IEEE 14-bus power grid model with dynamic HVAC loads. At the 2-minute mark of the simulation of this model, the power consumption is at its peak. Say, we falsify the operating temperature setpoint of the HVAC load connected to bus number $3$ starting from this time, which increases its power consumption by $4$ Mega Watt (MW). Fig.~\ref{MotFigFreq} illustrates the impact of this load alteration on the grid's operating frequency in the time (x-axis) vs grid frequency (y-axis) plot. As can be seen from the figure, the frequency response due to the load alteration (blue plot) oscillates but remains within the upper (red plot) and lower (black plot) safety limits \cite{kundur2007power}. This occurs because the load alteration is introduced during the peak power consumption period. As a result, the automatic generation control (AGC) program in the grid anticipates the increased power demand as part of normal grid conditions and meets this higher demand. 
\begin{figure}[h]
\centering

\begin{subfigure}[b]{0.48\columnwidth}
   \includegraphics[clip,width=\columnwidth]{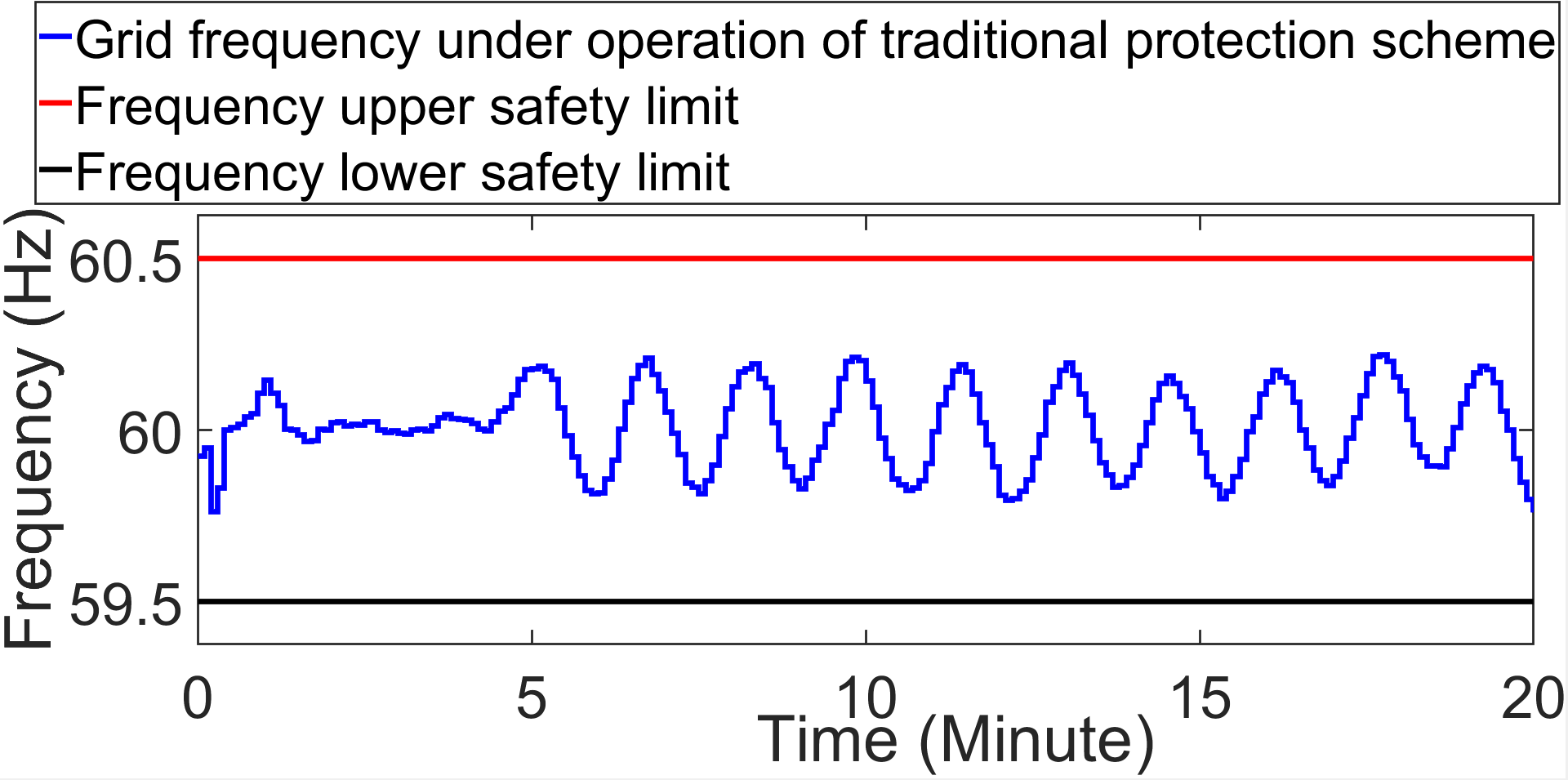}
   \caption{Grid operating frequency}
   \label{MotFigFreq}
\end{subfigure}
\hspace{0.01\columnwidth} 
\begin{subfigure}[b]{0.48\columnwidth}
    \includegraphics[clip,width=\columnwidth]{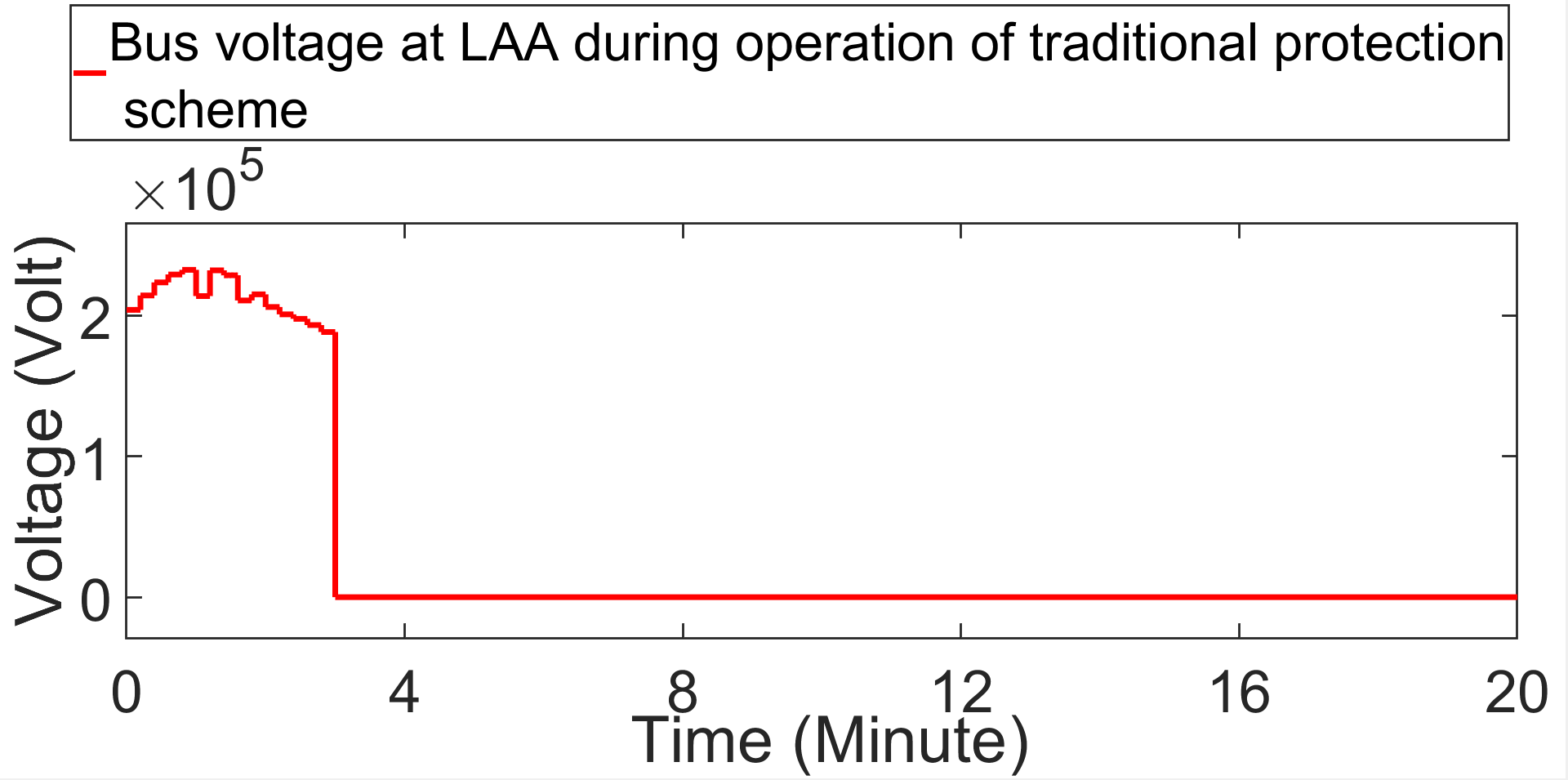}
    \caption{Phase a voltage waveform of bus 3}
    \label{MotFigVolt}
\end{subfigure}
\caption{Grid operating frequency and voltage waveform}
\vspace{-2mm}
\label{fig:figMotEx}
\end{figure}
But as shown in the time vs bus $3$ voltage plot in Fig.~\ref{MotFigVolt}, the phase \emph{a} voltage (red plot) of bus 3 significantly diminishes from the onset of the attack and drops to zero at approximately $2.1$ minutes, indicating a blackout condition. This occurs because traditional voltage protection schemes fail to detect the rapid Rate of Change of Voltage (RoCoV) caused by such a {\em stealthy load alteration} (SLA) via sensor data falsification. This raises a few research questions (\textbf{RQ}) regarding modern power grid security. \textbf{RQ1}: Is it possible to differentiate between peak power demand during normal grid operation and SLA-induced power demand by observing the RoCoV at grid buses? \textbf{RQ2}: Can we design an adaptive thresholding scheme for the existing voltage protection systems to mitigate any potential SLA scenarios? 
%
\par\textbf{Novelty and Contributions:}
To address these questions, in this work, we propose a comprehensive methodology for designing a novel Adaptive Protection System (APS) that adaptively updates the activation thresholds for existing voltage protection schemes to mitigate different LAA scenarios. Fig.~\ref{fig:system_model} summarises the overall proposed framework that designs an APS policy to mitigate any SLA with a game-theoretic guarantee.
We summarize the contributions as follows.
\par\noindent\textbf{1.} We develop a deep reinforcement learning (DRL)-based Adaptive Protection System (APS) that learns to adaptively adjust the activation thresholds of existing voltage protection schemes (annotated with 2 in Fig.~\ref{fig:system_model}) by sensing the presence of stealthy load alterations (SLAs). This  agent (green outlined box in Fig.~\ref{fig:system_model}) is trained in a competitive Multi-Agent RL (MARL) environment in the presence of an adversarial agent that launches such SLAs (red outlined box in Fig.~\ref{fig:system_model}).
\par\noindent\textbf{2.} To model an adversary as a load alteration attacker (LAA) with model-specific knowledge, an optimal SLA algorithm is designed that falsifies the temperature sensor data of IoT HVAC loads connected to the most unstable bus (annotated with 1 in Fig.~\ref{fig:system_model}).
By proving the existence of an equilibrium in this {\em two-player zero-sum attacker-mitigator game}, we theoretically ensure that the APS policy, output by our framework, mitigates any LAA strategy.
\par\noindent\textbf{3.} We validate the real-time applicability of our attack mitigation framework in a hardware-in-loop (HIL)  implementation. 
%
\begin{figure}[h!]
    \centering
    \vspace{-3mm}
    \includegraphics[width=\linewidth]{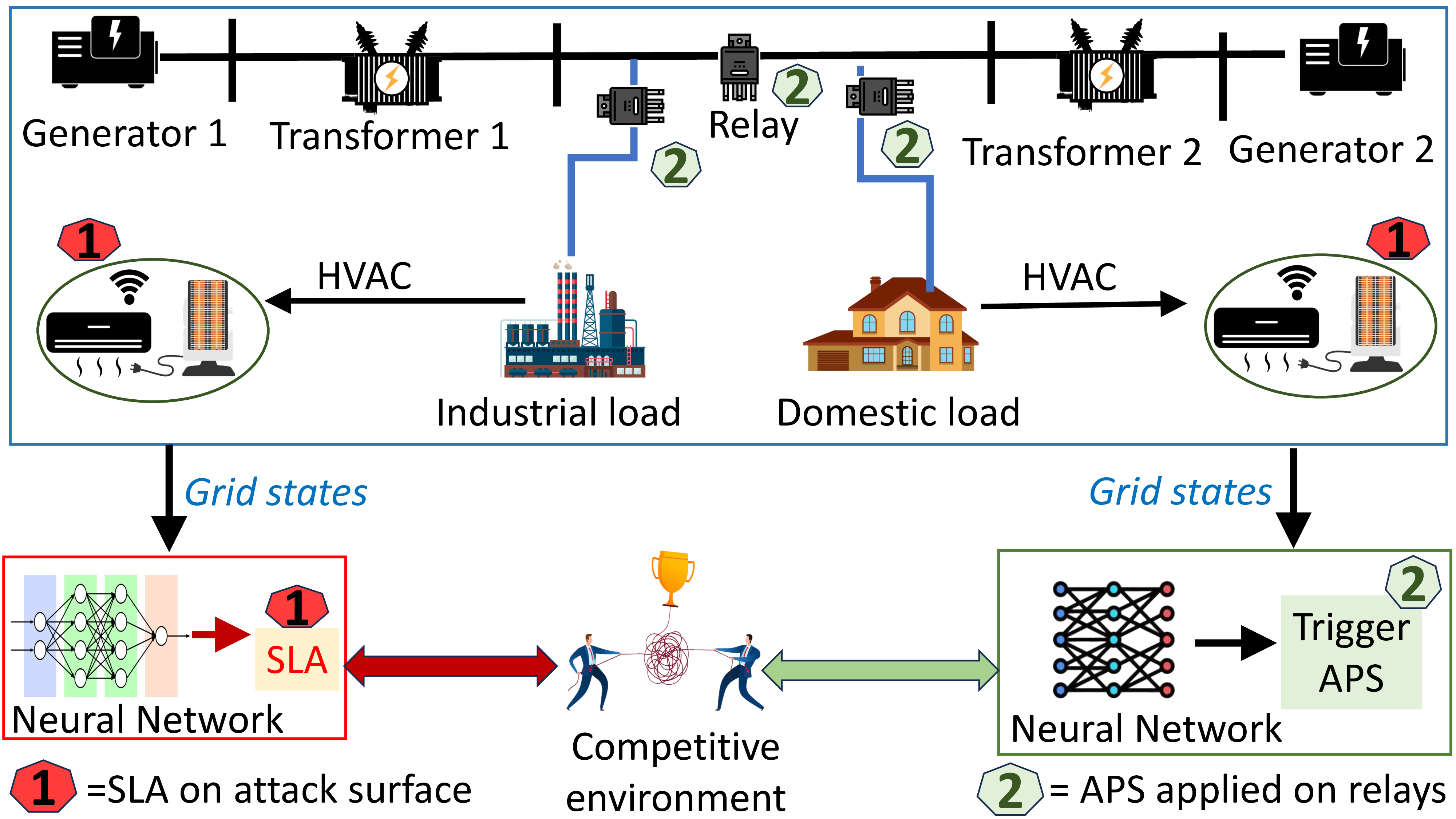} 
    \caption{Operational Framework of the Proposed Adaptive Protection System (APS)}
    \label{fig:system_model}
    \vspace{-3mm}
\end{figure}
%
\section{Preliminaries}
\label{sec:ssystem model}

In this section, we discuss the fundamental operation of smart grids and the related theoretical concepts.

\textbf{Grid Operation:}
\label{subsec:GridOperation}
Smart grids are functionally  divided into three primary subsystems. $(i)$ The `generation' subsystem consists of generators that convert mechanical input energy (e.g., steam, wind, solar) into electrical energy. Each generator unit is connected to an Automatic Generation Control (AGC) unit, which adjusts electrical power production based on the actual power demand in the grid \cite{kundur2007power}. $(ii)$ The `transmission' subsystem transports electrical power produced by the generators to substations via transmission lines. 
($iii$) The `distribution' subsystem   is responsible for the distribution of electricity from substations to end consumers, both domestic and industrial. 
\par In a grid, 
the power flow in electrical `buses' connecting various components can be described as follows. Let $V_i$ and $V_j$ represent the voltages at buses $i$ and $j$, respectively. Let $G_{ij}$ and $B_{ij}$ denote the conductance and susceptance of the transmission line connecting bus $i$ and bus $j$. Additionally, let $\theta_i$ and $\theta_j$ represent the voltage angles at buses $i$ and $j$, respectively. The active ($P$) and reactive ($Q$) power flow from the $i$-th bus to the $j$-th bus in an $n$-bus grid can be described by the power flow equations given below.
%
\begin{align}
\label{eqSys}\nonumber 
    P_{ij} &= |V_i| \sum_{j=1}^{n} |V_j| \left( G_{ij} \cos(\theta_i - \theta_j) + B_{ij} \sin(\theta_i - \theta_j) \right) \\
    Q_{ij} &= |V_i| \sum_{j=1}^{n} |V_j| \left( G_{ij} \sin(\theta_i - \theta_j) - B_{ij} \cos(\theta_i - \theta_j) \right)
\end{align}
%
The stability of each bus, particularly in the presence of disturbances like voltage fluctuations caused by transmission line faults, is referred to as \emph{Fast Voltage Stability Index} (FVSI) \cite{kundur2007power}. It relates the reactive power demand ($Q$) at a given bus to the voltage ($V$) at that bus and indicates how close the grid is to voltage instability or blackout \cite{soltan2018blackiot}. FVSI  is mathematically represented by the following equation for a bus $i$ connected to bus $j \leq n$ via a transmission line in an $n$-bus grid.
\begin{equation} 
\label{eqFVSI}
FVSI_{ij} = \frac{4Z^2_{ij}Q_i(B^2_{ij}+G^2_{ij})}{|V^2_i|B_{ij}} 
\end{equation}
$FVSI_{i,j}>1$ for any transmission line connecting bus $i$ with bus $j (\neq i)$ indicates voltage instability at $i$-th bus. An FVSI value less than 1 depicts voltage stability of the same. Grid operators use FVSI for real time stability assessment. Instability is handled by protection schemes as discussed next. 

\textbf{Grid Protection Systems}
\label{subsec:protection}
During events such as short circuit faults in transmission lines, generator outages, or an increase in power demand, the bus voltages and grid operating frequency may become distorted and, in severe cases, fall outside the safe operational limits of [176, 286] kilovolts (kV) for voltage and [59.5, 60.5] Hz for frequency \cite{kundur2007power}. To address these situations, grid operators deploy protection schemes to maintain safe grid operation. These schemes aim to swiftly isolate the affected area by disconnecting its loads, generators, and transmission lines to minimise damage to grid components. In our work, we consider the following protection schemes, which are widely used in bulk power systems \cite{maiti2023targeted}. \textit{(i) Under Voltage (UV) Protection:} When the voltage at a particular $j$-th bus ($V_j$) falls below the lower threshold of $V^l_{Th_j} = 176$ kV, the UV protection system is triggered. This system disconnects the transmission lines connected to the $j$-th bus after an activation delay of $t_{uv}$ minutes, where $t_{uv} = \frac{0.5}{1 - \left(\frac{V_j}{V^l_{Th_j}}\right)}$. \textit{(ii) Over Voltage (OV) Protection:} When the voltage at a particular $j$-th bus ($V_j$) exceeds the upper threshold of $V^u_{Th_j} = 286$ kV, the OV protection system is triggered. This system disconnects the transmission lines connected to the $j$-th bus after an activation delay of $t_{ov}$ minutes, where $t_{ov} = \frac{0.5}{\left(\frac{V_j}{V^u_{Th_j}}\right) - 1}$. \textit{(iii) Load Shedding:} When the operating frequency of a grid bus falls below the safety threshold of 59.5 Hz, the \emph{Under Frequency Load Shedding} (UFLS) protection system is activated. This system trips the load connected to the bus after a 9-minute delay \cite{huang2019not}. Similarly, when the bus frequency exceeds the upper safety threshold of 60.5 Hz, the \emph{Over Frequency Load Shedding} (OFLS) protection scheme is triggered, tripping the load at the bus with the same 9-minute delay \cite{huang2019not}. By tripping loads and disconnecting transmission lines, these schemes ensure reliable grid operation during fault conditions.

\par
\textbf{Load Alteration Attack Model}
\label{subsecLaa}
A \emph{Load Alteration Attacker} (LAA) in the grid~\cite{shekari2022madiot, soltan2018blackiot} aims to manipulate the active power consumption by varying the loads (both domestic and industrial) connected to the grid by injecting an additional power demand $\Delta P_i$ to bus $i$. Let $P_{ij}$ represent the power flow from the $i$-th bus to the $j$-th bus.
Therefore the power flow from the $i$-th bus to the $j$-th bus due to the LAA ($\tilde{P_{ij}}$), becomes $\tilde{P_{ij}} = P_{ij} + \Delta P_i$. Due to the distributed nature of smart grids, this LAA affects the voltage magnitudes and phase angles at the $i$-th and $j$-th buses. Therefore, the power flow from the $i$-th bus to the $j$-th changes as per the equation given below. 
\begin{align}
\label{eqprem1}
     \tilde{P}_{ij} = |\tilde{V_i}| \sum_{j=1}^{n} |\tilde{V_j}| \left( G_{ij} \cos(\tilde{\theta_i} - \tilde{\theta_j}) + B_{ij} \sin(\tilde{\theta_i} - \tilde{\theta_j}) \right)
\end{align}
Here, $\tilde{\theta_i}$ and $\tilde{\theta_j}$ denote the voltage phase angles at buses $i$ and $j$, respectively, and $\tilde{V_i}$ and $\tilde{V_j}$ represent the voltage magnitudes at buses $i$ and $j$ due to the LAA. The LAA models proposed by existing research work \cite{shekari2022madiot, soltan2018blackiot} trigger the protection schemes
\cite{huang2019not}. In this work, we develop a two-stage load alteration attacker that avoids triggering the existing protection schemes. 
\section{Proposed Methodology}
\label{secMethod}
The overview of our proposed methodology is presented in Fig.~\ref{fig:methodology}. The proposed framework, as depicted, takes the following inputs: {\bf (i)} a smart grid model, {\bf (ii)} its protection schemes and {\bf (iii)} HVAC loads as a set of potential attack surfaces. The framework has two primary components: {\bf (i)} smart grid vulnerability analysis and LAA modelling; to design and inject stealthy load alterations at optimal attack surfaces aiming for blackout without triggering protection systems, and {\bf (ii)} adaptive protection system (APS) modelling: for updating thresholds of the existing voltage protection systems for timely activation of protection schemes to avoid blackouts (see Fig.~\ref{fig:methodology}). 
Placing these components in a competitive environment, the proposed framework formulates a {\em two-player zero-sum Markov game} to theoretically ensure the existence of a learnable stationary policy/strategy that mitigates the effect of any LAA. Finally, the learned APS policy is deployed to counteract the effects of potential SLAs (refer Fig. \ref{fig:methodology}). In the following sections, we discuss the design of each component and the working of each step of the proposed framework in detail.
\begin{figure}[h!]
    \centering
    \includegraphics[width=0.92\linewidth]{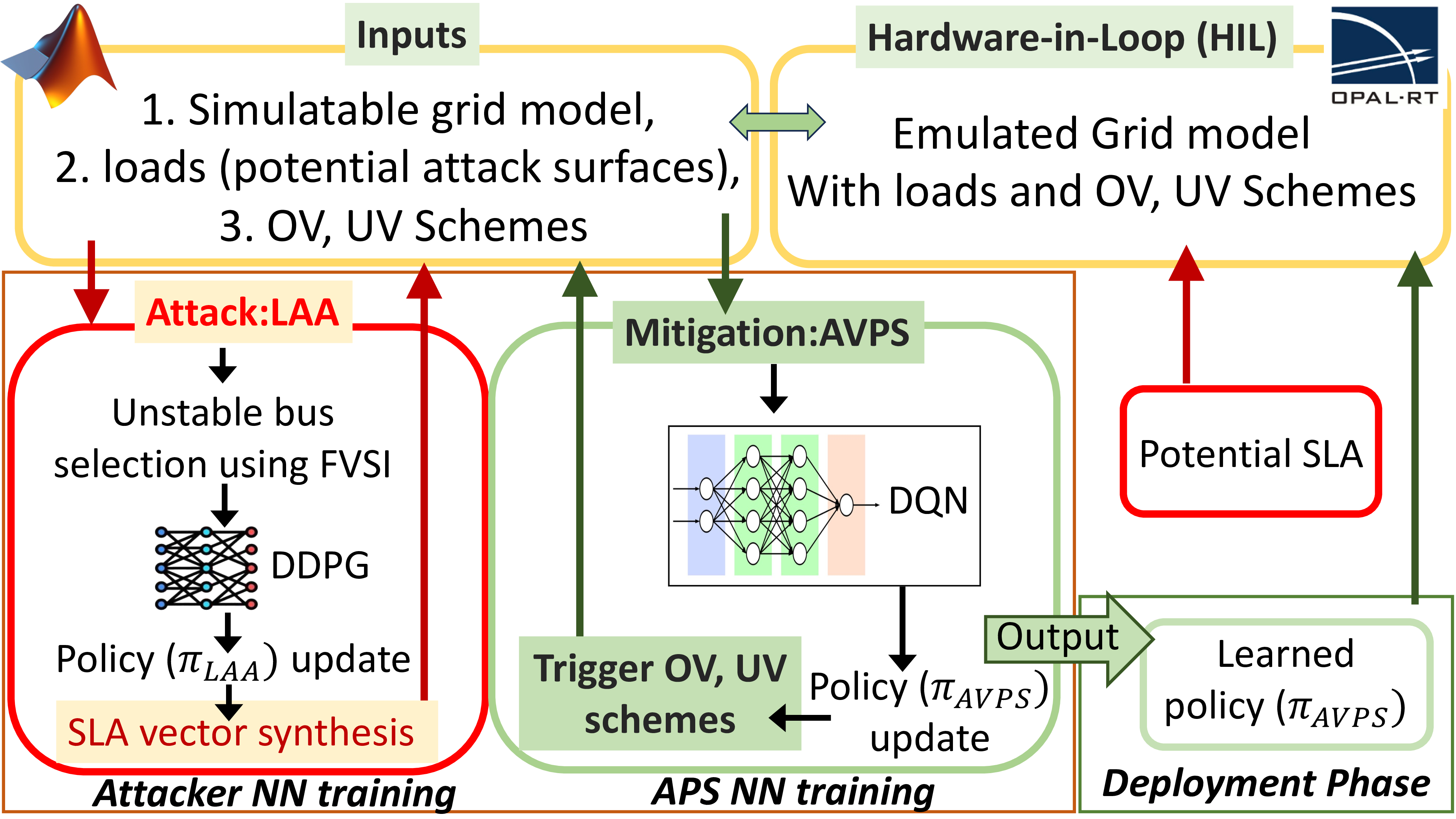} 
    \caption{Our proposed attack mitigation framework}
    \vspace{-2mm}
    \label{fig:methodology}
\end{figure}
\subsection{Novel Attacker-Mitigator Game in Smart Grid} 
\label{subsecGame}
We define a Markov game 
for  grid protection given as,
$\mathbf{\mathcal{G}_{prot}=\langle \mathcal{S}, \{avps,laa\}, A_{laa}\times A_{avps}, \{r_{laa}, r_{avps}\}, \mathcal{P}^{\prime},  \mu_{0}\rangle}$.
Here $\mathcal{S}\subset \mathbb{R}^{n^{\prime}}$ denotes a set of states ($n'$ is the dimension of state-space) consisting of the {\em (i)} active and reactive power flow $P_{ij}$ and $Q_{ij}$ through the transmission line connecting every $i$-th and $j$-th bus, {\em (ii)} voltage magnitude $V_i$ and angle $\theta_i$ of each $i$-th bus in the $n$-bus smart grid along with {\em (iii)} their voltage and frequency-based protection system thresholds $V^{l}_{Th_i}$ and $f_{Th}$ respectively. We consider the following two players in this game, a load alteration attacker $laa$ and an attack mitigator player $avps$. $A_i$, $r_i$ denotes the compact and non-empty action space and pay-off function for a player $i$ such that $r_i: \mathcal{S}\times A_i \mapsto \mathbb{R}$. The Markov game $\mathcal{G}_{prot}$ starts with some initial state $s^{(0)}\sim\mu_{0}$ (also, $s^{(0)}\in \mathcal{S}$), where $\mu_0$ denotes a Probability Density Function (PDF) over state space $\mathcal{S}$. Player 1, $laa$ and Player 2, $avps$, at every discrete time step $t\in \mathbb{N}^{+}$ encounters state $s^{(t)}\sim \mathcal{P}(.|s^{(t-1)},\langle a_{laa}^{(t-1)}, a_{avps}^{(t-1)}\rangle)$ and take actions $a_{laa}^{(t)}\in A_{laa}$ and $a_{avps}^{(t)}\in A_{avps}$ respectively. Here $\mathcal{P}$ is a Gaussian conditional PDF over state space $\mathcal{S}$ ($s^{(t)}\in \mathcal{S}$) with mean $\mu\in\mathcal{S}$ and measurement noise variance $R\in \mathbb{R}^{n^{\prime} \times n^{\prime}}$. Naturally, $\mu$ evolves following Eq.~\ref{eqSys} during usual smart grid operations. The actions of each player at every step are chosen following a policy function. For $laa$, the policy function is defined as $\pi_{laa}:\mathcal{S}\mapsto A_{laa}$ and For $avps$, it is $\pi_{laa}:\mathcal{S}\mapsto A_{laa}$.
%
The player $laa$ is a rogue consumer who manipulates its power consumption in such a way that it is not flagged by frequency or voltage protection systems, but still leads the grid towards a blackout condition. These power manipulations are known as stealthy load alterations (SLAs).
\par$\blacktriangleright$ {\em \underline{A Stealthy load alteration (SLA)} sequence is a sensor data falsification sequence injected in an IoT device, connected as a load in a grid bus such that \emph{a.} the resultant voltage of the bus deviates from its normal profile, leading to blackout condition;  \emph{b.} the grid parameters, frequency and voltage, remain within the permissible operational limits of the traditional protection systems (refer to section \ref{subsec:protection}), thus avoiding their activation before the blackout occurs.}
For example, consider that the IoT-enabled HVAC loads connected to every bus are input to our framework as the potential attack surface. On a side note, high wattage IoT enabled HVAC loads can be identified by using network scanning tools like Nmap~\cite{lyon2009nmap} by scanning in a specific grid region. Power consumption of these HVAC systems can be manipulated by falsifying the operating temperatures ($T$) sensed by such an IoT device as part of the HVAC. The designed SLA in this case is, therefore, a sequence of falsified temperature data $\{\Delta T^{(1)},\cdots,\Delta T^{(L)}\}$, where $\Delta T^{(t)} = \tilde{T}^{(t)} - T^{(t)} \forall t\in [1,L]$. $T^{(t)}$ and $\tilde{T}^{(t)}$ denote the operating temperatures of the HVAC without and with SLA, respectively. 
The change in power consumption $\Delta P_i = \tilde{P}_{ij}-P_{ij}$ is proportional to the temperature change $\Delta T$ as power consumption of the HVACs depends on the heat flow \cite{HVAC_access}, i.e., $\Delta P_i=k\times \Delta T_i$. Therefore, the attacker player $laa$ observes current state $s^{(t)}$ and induces SLA, i.e., $\Delta T^{(t)}\in A_{laa}$ as action, at every time step $t$ in the attacker-mitigator game $\mathcal{G}_{prot}$. As explained in Sec.~\ref{subsecLaa} under SLA the mean of the PDF $\mathcal{P}^\prime$ is $\mu^\prime$, and it evolves following Eq.~\ref{eqprem1}.
%
\par 
On the other hand, $avps$ represents voltage protection systems that intend to mitigate the load alterations that may lead to blackout conditions by tuning the activation thresholds of existing protection schemes (see Sec. \ref{subsec:protection}). The lower and upper voltage thresholds of these protection schemes are denoted with $V_{Th_i}^{l}$ and $V_{Th_i}^{u}$ i.e., $V_{Th_i}^{l}, V_{Th_i}^{u} \in A_{avps}\subset \mathbb{R}^{+}$.The pay-off functions for both players are designed using a function $f: \mathcal{S}\times A_{laa}\times A_{avps}\mapsto \mathbb{R}$.
\begin{align}
\label{eq:payoff}
\nonumber
    f = & 
    \underbrace{\log\left(1 + k\Delta{T}_{i}\right)}_{\textit{1. Increase power consumption}}-
\underbrace{\log\left(1+\frac{dV_i/dt}{R_{Th_i}}\right)}_{\textit{2. Restrict voltage change}} \\
    & + \underbrace{c_1 \log\left(1 + e^{\frac{-V_i}{V^l_{Th_i}}}\right) + c_2 \log\left(1 + e^{\frac{-V^u_{Th_i}}{V_i}}\right)}_{\textit{3. Avoid UV, OV triggering}}
\end{align}
Here, $R_{Th_i}$ is a threshold value chosen by observing the maximum voltage deviation under noise (i.e., without any fault or attack, under normal grid operation). $c_1$ is a boolean variable that returns a value of zero if $V_i < V^l_{Th_i}$ and one otherwise, and another boolean variable $c_2$ returns a value of zero if $V_i > V^u_{Th_i}$, and one otherwise. We consider pay-off for player $laa$, $r_{laa} = f$ and pay-off for player $avps$, $r_{avps} = -f$. The objective of the players would be to maximise their pay-offs by sequentially choosing proper actions at each discrete time step. {\em Part 1} of $f$ motivates maximisation of the total power deviation over the horizon by inducing SLA $\Delta T_i$ at a bus $i$ at every time step. {\em Part 2} penalises any action of $laa$ that causes rapid deviation in bus voltage $V_i$ beyond the nominal voltage deviation $R_{Th_i}$. Limiting this \emph{rate of change of voltage} to this nominal limit ensures low false positives by the mitigator. {\em Part 3} of $f$ incentivises $laa$ when its actions keeps the bus voltage $V_i \in [V_{th}^{l}, V_{th}^{u}]$. Therefore, $laa$ chooses an action at every time step that maximises $r_{laa}$ or $f$, leading to a blackout. In the next time step, $avps$ chooses a voltage threshold $V_{th}^{l}$ to maximise $r_{avps}$ or minimise $f$, i.e., the $laa$'s pay-off such that the protection system gets activated before a blackout occurs. Note that, we only consider $V_{th}^{u}$ as $avps$'s action since $V_{th}^{u} = \alpha \times V_{th}^{l}$ for any $\alpha\in (1,2]$. Now we discuss how we design a competitive Multi-Agent Reinforcement Learning (MARL) environment as part of the proposed framework to implement this two-player game. 
\subsection{Competitive MARL Environment Design}
\label{subsecMarl}
To implement this attacker-mitigator game $\mathcal{G}_{prot}$, we use two neural networks (NNs) as attacker player $laa$ and mitigator player $avps$. These attacker and mitigator players are placed in the smart grid that is input to our framework to act as LAA and APS components, respectively. As formalized in $\mathcal{G}_{prot}$, the LAA component $laa$ injects SLAs in attack surfaces input to the framework. The APS component $avps$ adjusts the voltage thresholds of the existing protection schemes that are also input to our framework. The LAA component is trained using \emph{Deep Deterministic Policy Gradient} (DDPG) policy to cause blackouts in the grid by injecting SLAs from its continuous action space $A_{laa}$. Whereas, the APS component is trained using Q-learning to adjust the voltage threshold from the discrete action space $A_{avps}$ and counteract the blackouts intended by the LAA component. This is because the threshold decisions are positive values that lie within a convex and compact interval between $80\%$ to $130\%$ of the nominal bus voltage \cite{huang2019not}. Our framework trains these two components in a MARL environment that resembles the attacker-mitigator game. We present the training process using Algo.~\ref{alg:game}. 
\begin{algorithm}
\caption{MARL setup: LAA vs. APS}
\label{alg:game}
\begin{algorithmic}[1]
\State \textbf{Input:} Grid Environment with attack surfaces and protection systems, LAA initial policy $\pi'_{laa}$, APS initial policy $\pi_{avps}$, Episode Length $L$
\label{alg:input}
\State $s^{(0)} \sim \mu_0$ at episode start\Comment{state Init from a distribution}\label{alg:init1}
\For{all $t\in [0, L]$} \label{alg:for}
\State $\vartriangleright$ \textbf{LAA Player 1:} \label{alg:attacker_start}
\State $si \gets 1$, $s^{(t)} \gets \{\}$, $'i, j' \gets 0$ \Comment{episode init} \label{alg:init}
\For{$i = 1$ \textbf{to} $n$} \Comment{for an $n$-bus grid} \label{alg:attacker_for_loop_start}
    \For{$j = 1$ \textbf{to} $n$} \label{alg:attacket_fori}
        \If{$FVSI_{i,j} > si$} \Comment{FVSI comparison} \label{alg:attacker_if}
            \State $si \gets FVSI_{i,j}$, $ui \gets i$, $uj \gets j$ 
            \label{alg:fvsi_update}
        \EndIf \label{alg:attacker_if_end}
    \EndFor \label{alg:attacker_for_loop_end_1}
    \State $s^{(t)} \gets s^{(t)} \cup \{ P_{i'j'}, V_{i'}, \theta_{i'}, V^l_{Th_{i'}}\}$ 
    \label{alg:attacker_observe}
    \State $a^{(t)}_{laa} = \pi'_{laa}(s^{(t)})$ \Comment{SLAs on Most Unstable bus} \label{alg:attacker_action}
    \State Compute ${r'}_{laa}^{(t)}$ \Comment{Compute LAA reward} \label{alg:attacker_reward}
\EndFor \label{alg:attacker_for_loop_end}
\State $\vartriangleright$\textbf{ APS Player 2:} \label{alg:aps_start}
\State $a^{(t)}_{avps} = \pi_{avps}(s^{(t)})$ \Comment{Adaptive Threshold Update} \label{alg:aps_action}
\State Compute ${r}_{avps}^{(t)}$ \Comment{Compute APS reward} \label{alg:aps_reward}
\EndFor \label{alg:episode_for_loop_end}
\State Update $\pi'_{laa}, \pi_{avps}$ at every Epoch-End\Comment{Policy Updates} \label{alg:attacker_update}
\label{alg:aps_update}
\State \textbf{Output:} Optimal Mitigator policy  $\pi^*_{avps}$ \label{alg:output}
\end{algorithmic}
\end{algorithm}
\par Algo.~\ref{alg:game} takes as input the smart grid model consisting of vulnerable loads and protection systems, the initial policies of the LAA and APS agents, and the episode length $L$ (in terms of time steps) for training the agents (line~\ref{alg:input}). At the start of every episode, the grid states are initialised from the $\mu_0$ distribution as considered for the Markov game $\mathcal{G}_{prot}$(line~\ref{alg:init1}). Algo.~\ref{alg:game} then iterates over an entire episode length to let the agents interact with the competitive environment (line \ref{alg:for}). The training of the LAA agent (Player 1 $laa$) begins by initialising the variable $si$ to 1. $si$ stores the highest FVSI. An empty set $s^{(t)}$ to store the state information at time step $t$ as the participating agents' observations, and the variables $i',j'$ to store the bus numbers with the highest FVSI (lines \ref{alg:attacker_start}-\ref{alg:init}). 
\par In {\em step 1} of the attack, the LAA DRL agent computes $FVSI$ of each transmission line following Eq.~\ref{eqFVSI} for a varying load profile. To integrate model-specific knowledge into the attack process, the intelligent LAA agent selects the \emph{most unstable bus} connected with a transmission line, having the highest FVSI. This makes the proposed LAA model outperform the existing ones in the literature~\cite{shekari2022madiot, soltan2018blackiot} by ensuring stealth. To achieve this, for each bus pair ($i,j$), Algo. \ref{alg:game} compares the FVSI values. If the FVSI value is greater than the current value of $si$, Algo. \ref{alg:game} updates $si$ with the new FVSI value and sets $i'$ to $i$, and $j'$ to $j$, indicating that bus $i$ is the most unstable at current time step (lines \ref{alg:attacket_fori}-\ref{alg:fvsi_update}). Algo. \ref{alg:game} then stores the observation variables for this most unstable bus in the grid in the set $s^{(t)}$ (line \ref{alg:attacker_observe}). In {\em step 2} of the attack, the LAA agent takes an action by inducing SLAs at the most unstable bus $i'$, based on these observed states (line \ref{alg:attacker_action}), followed by the computation of its reward (line \ref{alg:attacker_reward}). Basically, at each time step $t$ LAA agent observes the states of the grid $s^{(t)}\in \mathcal{S}$ and generates SLA values as actions $\Delta T_{i'}=a^{(t)}_{laa}\in A_{laa}$ following a policy $\pi_{laa}$ such that its reward $r_{laa}$ is maximised.
\par Following this, Algo. \ref{alg:game} starts the operation of the APS agent (Player 2 $avps$) (line \ref{alg:aps_start}). The APS agent adjusts the activation thresholds of the existing OV and UV protection schemes based on the observed states in $s^{(t)}$ (line \ref{alg:aps_action}). The reward for the APS agent is then computed (line \ref{alg:aps_reward}). The APS DRL agent, at each time-step $t$, observes the same grid parameters as $avps$ and outputs the adaptive thresholds for existing voltage protection systems $V_{Th}^{l}=a_{avps}^{(t)}\in A_{avps}$ as actions following a policy $\pi_{avps}$ such that its reward $r_{avps}$ is maximised. At the end of every epoch, both agents' policies are updated (line \ref{alg:attacker_update}). The critic network in the LAA DDPG agent updates its actor network's policy $\pi_{laa}$ (responsible for generating SLAs) such that the expected value of an action-value function $Q$ is maximised. The DQN, used as the APS agent, also updates its policy $\pi_{avps}$ by maximising a similar $Q$ function in its discrete action space. In line \ref{alg:output}, Algo.~\ref{alg:game} outputs the optimal mitigation policy for the APS agent.
\subsection{Correctness of the Algorithm}
\label{subsecGameThProof}
The sequential action of the players in the competitive MARL environment, with opposing objectives, makes $\mathcal{G}_{prot}$ a two-player \emph{min-max Stackelberg game}. Therefore looking for a Stackelberg equilibrium  in this game is a search for a joint policy $\pi^*_{laa}, \pi^*_{avps}$ for players $laa$ and $avps$ respectively that solves the following optimisation problem.
\begin{align}
\label{eqOpt}
\nonumber
 \pi^*_{laa}, \pi^*_{avps} = &\underset{\pi_{avps}:\mathcal{S}\mapsto A_{avps}}{\text{arg min}} \quad
\underset{\pi_{laa}:\mathcal{S}\mapsto A_{laa}}{\text{arg max}} \quad 
 \mathbb{E}[ \sum^{L}_{i=1} \gamma^t f] \\
 = &\underset{\pi_{avps}}{\text{arg min}}\quad 
\underset{\pi_{laa}}{\text{arg max}} \quad J_{\pi_{laa}, \pi_{avps}}
\end{align}
Here $L\in \mathbb{N}^+$ is a large horizon for the discrete game,  $\gamma \in (0,1]$ is a discount factor for the future pay-off values and $J_{\pi_{laa}, \pi_{avps} }$ is the objective of the optimisation problem for policies $\pi_{laa}, \pi_{avps}$ respectively for the attacker player $laa$ and mitigator player $avps$. 
\par\noindent $\bullet$ \textbf{Completeness:} Notice that, $f$ is designed in such a way that {\em (i)} it is continuous and smooth in the action spaces $A_{laa}$ and $A_{avps}$; 
{\em (ii)} second derivative of $f$ w.r.t. $\Delta T_{i}$ is negative in $A_{laa}$, and {\em (iii)} second derivative of $f$ w.r.t. $V_{Th_i}^{u}$ is positive in $A_{avps}$. 
Therefore $f$ is \textbf{convex} w.r.t. $V_{Th_i}^{u}$ and \textbf{concave} w.r.t. $\Delta T_{i}$. 
As we know from \emph{Jensen's inequalities}, convexity (and concavity) holds even in the presence of an expectation operator~\cite{zapala2000jensen}. Therefore, the fact that $f$ is convex in action space $A_{avps}$ and concave in action space $A_{laa}$ implies the objective function $J$ in Eq.~\ref{eqOpt} is also convex in $A_{avps}$ and concave in action space $A_{laa}$.
This makes the zero-sum min-max Stackelberg game $\mathcal{G}_{prot}$ a {\emph convex-concave Stackelberg game}. Using the findings of~\cite{goktas2022zero}, we can claim the following for the attacker-mitigator game $\mathcal{G}_{prot}$. 
\begin{Claim}
\label{claim1}
In an attacker-mitigator game, modelled like $\mathcal{G}_{prot}$ there always exists a stationary policy $\mathbf{\pi^{*}_{avps}}$ for player 2 $avps$ that can mitigate any actions of player 1 $laa$ having any policy.
\end{Claim}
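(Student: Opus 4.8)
The plan is to establish Claim~\ref{claim1} by invoking the existence theory for convex--concave zero-sum Stackelberg (min-max) games, as developed in~\cite{goktas2022zero}, and then to lift the existence of an equilibrium action/strategy to the existence of a \emph{stationary} mitigator policy $\pi^*_{avps}$ over the Markov game $\mathcal{G}_{prot}$. Concretely, I would proceed in four steps. \textbf{Step 1 (Well-posedness of the game data).} Verify that $\mathcal{G}_{prot}$ meets the structural hypotheses required: the action sets $A_{laa}$ and $A_{avps}$ are nonempty, compact, and convex (for $avps$, the thresholds lie in the convex compact interval $[0.8,1.3]$ times the nominal bus voltage, per Sec.~\ref{subsecMarl}; for $laa$, $\Delta T_i$ ranges over a bounded interval dictated by the HVAC setpoint limits), and the coupling is zero-sum with payoff $f$ continuous and smooth on $A_{laa}\times A_{avps}$. \textbf{Step 2 (Convex--concave structure).} Re-invoke the computations already asserted in the Completeness paragraph: $\partial^2 f/\partial(\Delta T_i)^2 < 0$ and $\partial^2 f/\partial(V^u_{Th_i})^2 > 0$, so $f$ is concave in the attacker's action and convex in the mitigator's action; by Jensen, the discounted objective $J_{\pi_{laa},\pi_{avps}} = \mathbb{E}[\sum_{t=1}^{L}\gamma^t f]$ inherits the same concave--convex structure in the respective policy parameters.

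\textbf{Step 3 (Existence of a saddle point).} With Steps 1--2 in place, apply the min-max / Stackelberg existence result of~\cite{goktas2022zero} (a Sion-type minimax theorem for the convex--concave coupled setting, where the action correspondence $Z(\cdot)$ from the coupling constraint is upper hemicontinuous with nonempty convex compact values): this yields the existence of a value $v^* = \min_{\pi_{avps}}\max_{\pi_{laa}} J = \max_{\pi_{laa}}\min_{\pi_{avps}} J$ and of an optimal mitigator argument $\pi^*_{avps}$ attaining the outer minimum, i.e.\ $\sup_{\pi_{laa}} J_{\pi_{laa},\pi^*_{avps}} = v^* < \infty$ (finiteness from compactness of the action sets, continuity of $f$, finite horizon $L$, and $\gamma\le 1$). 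This $\pi^*_{avps}$ is, by construction, a best-in-the-worst-case response to \emph{every} attacker policy $\pi_{laa}$, which is exactly the ``mitigate any actions of player~1 having any policy'' assertion. \textbf{Step 4 (Stationarity).} Argue that the minimizer can be taken stationary: because the transition kernel $\mathcal{P}'$ is time-homogeneous (the Gaussian conditional PDF depends only on the previous state--action pair, not on $t$) and the per-step reward $f$ depends only on the current state and actions, the induced dynamic program has a time-invariant Bellman operator; a standard contraction/fixed-point argument ($\gamma<1$, bounded rewards) or, in the undiscounted $\gamma=1$ finite-horizon case, a value-iteration argument, shows an optimal \emph{stationary} mitigator policy $\pi^*_{avps}:\mathcal{S}\mapsto A_{avps}$ exists and realizes $v^*$. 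Concluding, $\pi^*_{avps}$ is the desired stationary policy.

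I expect the main obstacle to be Step 4 together with the careful matching of hypotheses in Step 3: the cited theorem of~\cite{goktas2022zero} is a \emph{one-shot} convex--concave Stackelberg existence result, whereas Claim~\ref{claim1} asserts a stationary \emph{policy} in a multi-stage Markov game, so the nontrivial work is to argue that the stagewise convex--concave structure is preserved under the Bellman recursion (i.e.\ that the value function remains convex in the mitigator's continuation action at each stage, which is needed to re-apply the saddle-point theorem layer by layer) and that the optimizer can be selected measurably and time-independently. A secondary subtlety is the coupling constraint $g(a_{laa},a_{avps})\ge 0$: one must check that the feasible-response correspondence $Z$ has the closedness/convexity needed for Berge's maximum theorem so that the ``arg min of arg max'' in Eq.~\ref{eqOpt} is attained rather than merely approached. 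If a fully rigorous stagewise argument is too heavy for the venue, the pragmatic fallback is to treat $\pi_{laa}$ and $\pi_{avps}$ as elements of suitable (compact, convex) policy classes and apply the one-shot theorem directly to $J$ as a function on the product policy space, which is what the Completeness paragraph already gestures at; I would make that reduction explicit and cite the relevant continuity of $J$ in the policy parameters as the one remaining technical check.
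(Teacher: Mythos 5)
Your proposal follows essentially the same route as the paper: both reduce the claim to the existence of a Stackelberg equilibrium in the zero-sum convex--concave min-max game defined by Eq.~\ref{eqOpt} and cite~\cite{goktas2022zero} for that existence, then conclude that $\pi^*_{avps}$ handles every attacker policy because it handles the optimal one. The paper's entire proof is just that two-sentence invocation plus the saddle-point inequality (written there as $J_{\pi_{laa},\pi^*_{avps}} \geq J_{\pi^*_{laa},\pi^*_{avps}}$, whereas your version $\sup_{\pi_{laa}} J_{\pi_{laa},\pi^*_{avps}} = v^*$ points the inequality the correct way, since the attacker is the inner maximizer); the issues you flag in Steps~3--4 --- that the cited result is a one-shot existence theorem while the claim concerns a stationary policy in a multi-stage Markov game, that the convex--concave structure must survive the Bellman recursion, and that attainment and measurable stationary selection need checking --- are all left unaddressed in the paper, so they are gaps in the published argument rather than in yours.
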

\begin{proof}
As shown in~\cite{goktas2022zero}, for any zero-sum convex-concave  Stackelberg Game like $\mathcal{G}_{prot}$, there always exists a Stackelberg equilibrium for a policy pair $\langle \pi^{*}_{laa}, \pi^*_{avps}\rangle$ that are the stationary points for the game as they are the solution of the optimisation problem Eq.~\ref{eqOpt}. Therefore, for any suboptimal policy $\pi_{laa}$ of attacker player $laa$, the objective function satisfies $J_{\pi_{laa},\pi^*_{avps}} \geq J_{\pi^*_{laa},\pi^*_{avps}}$. This proves the fact that there always exists a stationary mitigator's policy $\pi^*_{avps}$ that can mitigate actions taken by any suboptimal attacker policy $\pi_{laa}$ as it can mitigate the optimal one $\pi^*_{laa}$.
\end{proof}
%
%
%
\begin{figure*}[!ht]
    \begin{subfigure}{0.65\columnwidth}
\includegraphics[width=\linewidth]{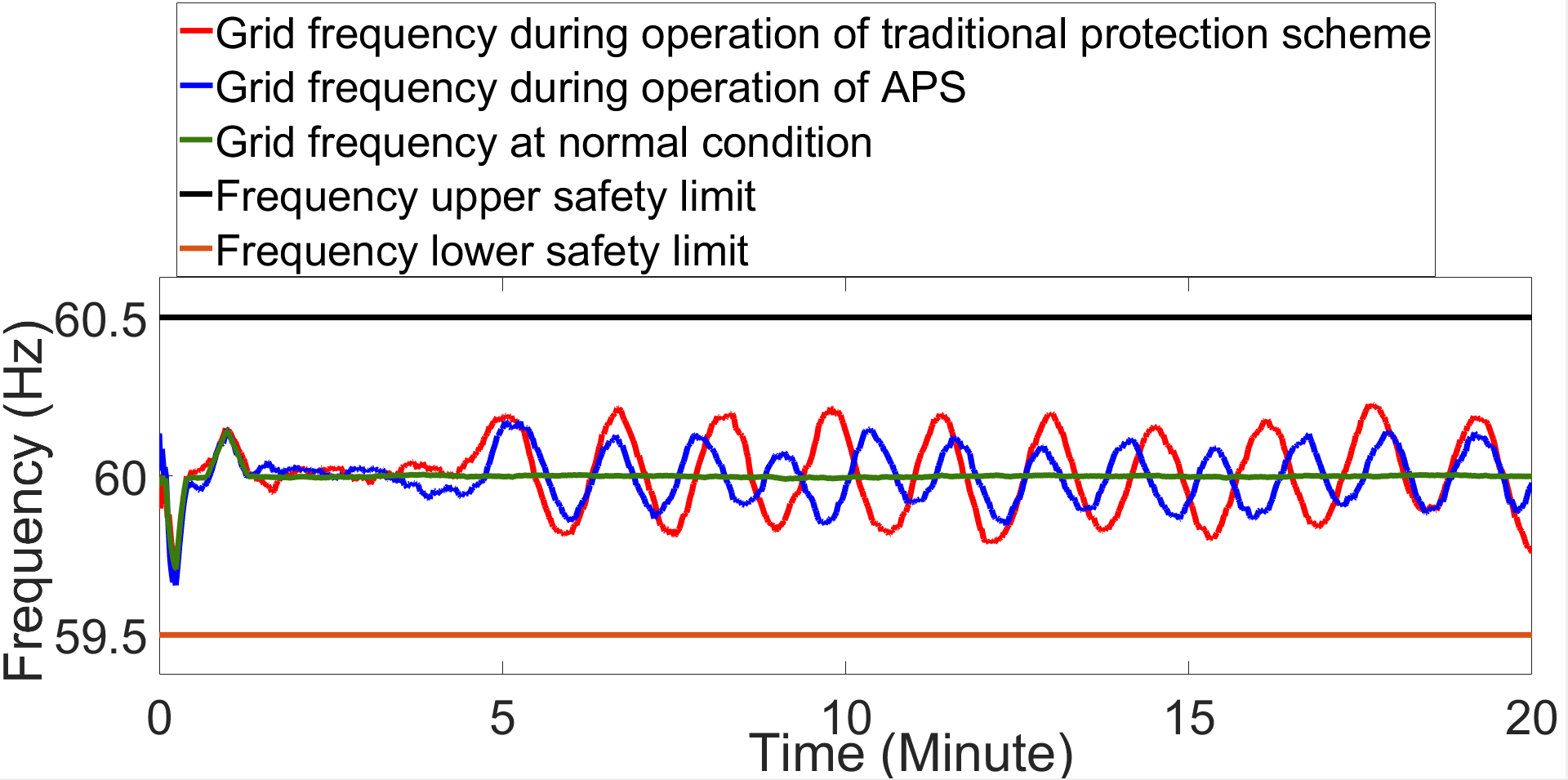}
  \caption{Grid operating frequency under the influence of LAA.}
  \label{freq14}
    \end{subfigure}
    \hfill
    \begin{subfigure}{0.65\columnwidth}
\includegraphics[width=1\textwidth]{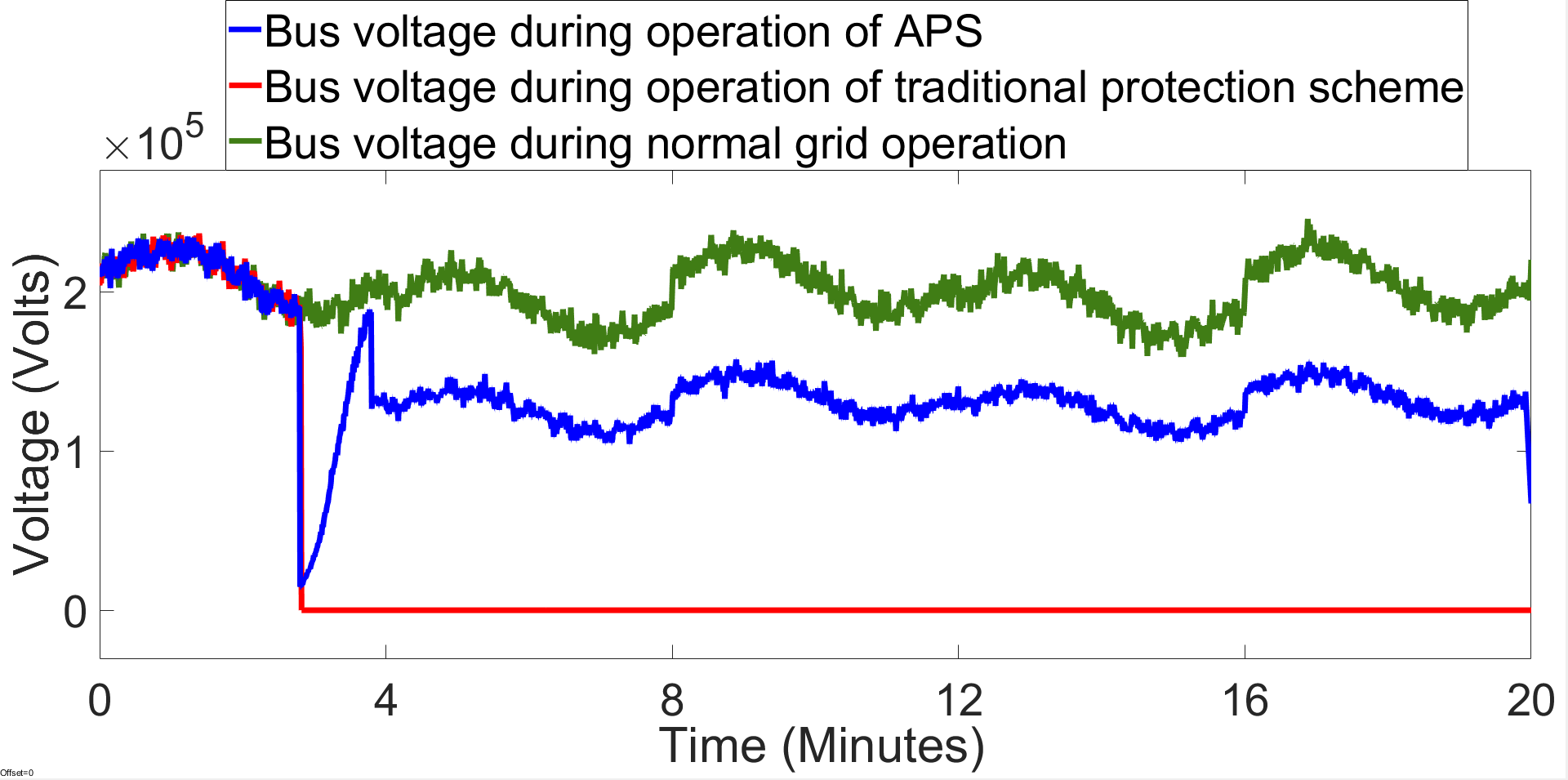}
  \caption{Voltage waveform of bus 3 (phase a) during LAA.}
  \label{volt14}
    \end{subfigure}
    \hfill
    \begin{subfigure}{0.65\columnwidth}
        \includegraphics[width=1\textwidth]{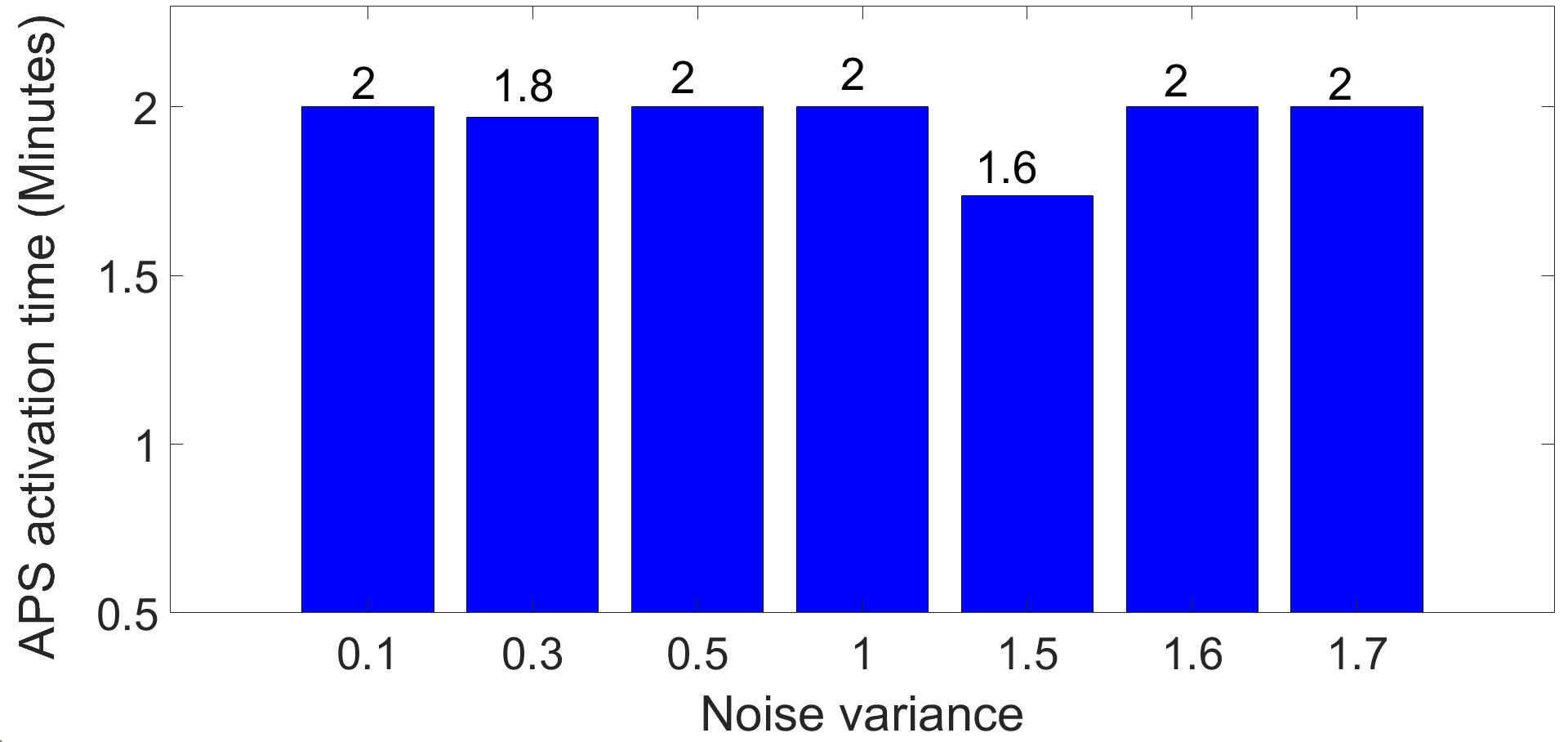}
  \caption{Activation time of the APS in the presence of noise in the HVAC load profile.}
  \label{FAR14}
    \end{subfigure}
 
 \caption{Real-time simulation results of the IEEE 14 bus model}
 \vspace{-5mm}
 \label{figAttackSimulationHIL_3}
\end{figure*}
\par\noindent$\bullet$ \textbf{Soundness:} 
%
The action-value function $Q$ is maximised to update policies in both DDPG and DQN. For an $L$ length sequence of actions, the $Q$ value resembles the cost $J$ from Eq.~\ref{eqOpt}. Hence, the competitive MARL in Algo.~\ref{alg:game} also resembles the convex-concave game $\mathcal{G}_{prot}$ and thus learns a stationary policy that solves Eq.~\ref{eqOpt} as claimed in~\cite{goktas2022zero}. Therefore the policy learnt at the end of the training converges to a sound policy $\pi^*_{avps}$ that mitigates any SLA following Claim~\ref{claim1}.
\section{Experimental Results}
\label{secExp}
We present our experimental results for the proposed framework using a standard IEEE 14-bus power grid model, designed in Matlab Simulink. It is equipped with dynamic HVAC loads as potential attack surfaces, and existing protection systems, as discussed in Section \ref{sec:ssystem model}. The HVAC loads, with variable power demand profiles, were designed using Matlab's standard house heating system model. Both the attacker and APS agents were developed using the DRL toolbox in Matlab, running on a workstation with a 16-core Intel Xeon CPU, 64 GB of RAM, and 16 GB NVIDIA Quadro GPU. The training process, as described in Algo. \ref{alg:game}, was conducted over 500 episodes, each spanning 120,000 time steps with a sampling time of 0.01 s, resulting in a total training duration of 20 minutes per episode. The codebase for our framework is available in~\cite{GIT}.
The learnt APS policy is emulated in real-time along with the learnt LAA policy using Opal RT OP4510 HIL setup with Artemis solver. The plots in Fig.~\ref{figAttackSimulationHIL_3} demonstrate the effectiveness of the output APS policy
by plotting the following grid parameters.
\par\noindent\textbf{$(i)$ Grid operating frequency:} Fig.~\ref{freq14} illustrates the grid's operating frequency waveform under SLA that increases the power demand of the HVAC loads attached to bus 3 (with the highest FVSI). This is achieved by manipulating the HVAC operating temperature during its peak power consumption time, which occurs at 2 minutes into the real-time simulation. The peak power consumption period was identified by analyzing the load profile of Matlab's house heating system model. As shown in Fig. \ref{freq14}, the grid's operating frequency (green plot) stabilizes to the nominal 60 Hz range after 3 minutes during normal operation, without the presence of SLA. Under SLA, when the traditional \emph{Over Voltage} (OV) and \emph{Under Voltage} (UV) protection schemes are active, the operating frequency (red plot) oscillates around the 60 Hz mark but remains within the upper (black plot) and lower (red plot) safety limits. Similarly, under SLA and in the presence of an APS agent, the frequency (blue plot) stays within the safety limits. This occurs because AGC interprets the increase in power demand as normal during peak consumption and adaptively adjusts the power production to meet the increased demand, keeping the frequency trajectory within the safe operational limits. As a result, the LAA remains undetected by the grid operators.
\par \noindent\textbf{$(ii)$ Voltage waveform at bus 3:} In Fig.~\ref{volt14}, we plot the voltage waveform of bus 3 (phase $a$) under SLA. As can be seen, the voltage waveform (green plot) during normal grid operation, without the SLA, follows a sinusoidal pattern and operates at its rated value of $2 \times 10^5$ volts. However, under SLA, at the 2-minute mark, the voltage waveform (red plot) immediately drops to zero, indicating a {\em blackout condition} caused by the SLA. This occurs because the traditional OV and UV protection schemes are activated with a predefined delay after the grid voltage trajectory crosses their fixed activation thresholds. In contrast, our APS scheme dynamically adjusts the activation thresholds of the OV and UV schemes, triggering them at the most appropriate time to mitigate the SLA. This allows the APS agent to maintain the voltage at bus 3 at an operational level of $1 \times 10^5$ volts (blue plot), thereby averting the blackout.
\par\noindent\textbf{$(iii)$ False Positive Rates (FPR) of the APS Policy:} The operation of the APS agent depends on the power demand or load profile of the grid, which constitutes its observation space. Therefore, it is crucial to ensure the reliable operation of the APS agent in real-world grid settings, even in the presence of errors in the observation space, such as noise in the load profile. To evaluate the APS agent's accuracy in triggering the OV and UV schemes under noisy conditions, we first introduced uniformly distributed noise with variance values of (0.1, 0.3, 0.5, 1, 1.5, 1.6, 1.7) MW into the HVAC load profiles of the IEEE 14-bus model. For each noise value, we ran 10 real-time simulations of the SLA on the grid, initiating the attack at the 2-minute mark. We then plotted the triggering times of the APS agent for each noise scenario across all runs. With this experiment we evaluate how the incorporation of nominal RoCoV as a threshold parameter reduces the FPR of the proposed APS.
\par As shown by the blue bar plots in Fig.~\ref{FAR14}, for noise variance values of 0.1, 0.5, 1, 1.6, and 1.7 MW, the APS agent successfully triggered the OV and UV schemes at the appropriate time (2 minutes) in all runs, mitigating the attack. However, in the 0.3 and 1.5 MW noise scenarios, the APS was triggered prematurely. This results in a FPR of 2/7, which is significantly low. Thus, our APS agent demonstrates effective mitigation of LAAs, even in the presence of noise in its observation space (load profile). 
%
%
\section{Conclusion and Future Work}
\label{secConcl}
This work proposes an adaptive policy for voltage-based protection systems in a smart grid to mitigate load alteration attack sequences that stealthily attempt to cause blackouts in the grid. For a given grid model the proposed framework outputs an optimal policy for adaptive voltage protection that is game-theoretically proven to mitigate any stealthy load alterations by timely activating the voltage protection systems in the grid. In the future, we would like to extend our framework to design adaptive protection policies for other relevant classes of protection systems and also test the current method on larger models (e.g. IEEE 118 Bus).
\bibliographystyle{IEEEtran}
\bibliography{IEEEabrv,ref2}

\begin{thebibliography}{10}
\providecommand{\url}[1]{#1}
\csname url@rmstyle\endcsname
\providecommand{\newblock}{\relax}
\providecommand{\bibinfo}[2]{#2}
\providecommand\BIBentrySTDinterwordspacing{\spaceskip=0pt\relax}
\providecommand\BIBentryALTinterwordstretchfactor{4}
\providecommand\BIBentryALTinterwordspacing{\spaceskip=\fontdimen2\font plus
\BIBentryALTinterwordstretchfactor\fontdimen3\font minus \fontdimen4\font\relax}
\providecommand\BIBforeignlanguage[2]{{%
\expandafter\ifx\csname l@#1\endcsname\relax
\typeout{** WARNING: IEEEtran.bst: No hyphenation pattern has been}%
\typeout{** loaded for the language `#1'. Using the pattern for}%
\typeout{** the default language instead.}%
\else
\language=\csname l@#1\endcsname
\fi
#2}}

\bibitem{soltan2018blackiot}
S.~Soltan, P.~Mittal, and H.~V. Poor, ``$\{$BlackIoT$\}$:$\{$IoT$\}$ botnet of high wattage devices can disrupt the power grid,'' in \emph{27th USENIX Security Symposium (USENIX Security 18)}, 2018, pp. 15--32.

\bibitem{yang2023resilient}
S.~Yang, K.-W. Lao, Y.~Chen, and H.~Hui, ``Resilient distributed control against false data injection attacks for demand response,'' \emph{IEEE Transactions on Power Systems}, vol.~39, no.~2, pp. 2837--2853, 2023.

\bibitem{shekari2022madiot}
T.~Shekari, A.~A. Cardenas, and R.~Beyah, ``$\{$MaDIoT$\}$ 2.0: Modern $\{$High-Wattage$\}$$\{$IoT$\}$ botnet attacks and defenses,'' in \emph{31st USENIX Security Symposium (USENIX Security 22)}, 2022, pp. 3539--3556.

\bibitem{huang2019not}
B.~Huang, A.~A. Cardenas, and R.~Baldick, ``Not everything is dark and gloomy: Power grid protections against $\{$IoT$\}$ demand attacks,'' in \emph{28th USENIX Security Symposium (USENIX Security 19)}, 2019, pp. 1115--1132.

\bibitem{HVAC_access}
C.~Toyama, ``Mitsubishi electric air conditioning system,'' \emph{Cybersecurity and Infrastructure Security Agency}, 2021.

\bibitem{koley2021catch}
I.~Koley, S.~Adhikary, and S.~Dey, ``Catch me if you learn: Real-time attack detection and mitigation in learning enabled cps,'' in \emph{2021 IEEE Real-Time Systems Symposium (RTSS)}.\hskip 1em plus 0.5em minus 0.4em\relax IEEE, 2021, pp. 136--148.

\bibitem{tan2016optimal}
R.~Tan, H.~H. Nguyen, E.~Y. Foo, X.~Dong, D.~K. Yau, Z.~Kalbarczyk, R.~K. Iyer, and H.~B. Gooi, ``Optimal false data injection attack against automatic generation control in power grids,'' in \emph{2016 ACM/IEEE 7th International Conference on Cyber-Physical Systems (ICCPS)}.\hskip 1em plus 0.5em minus 0.4em\relax IEEE, 2016, pp. 1--10.

\bibitem{ni2019multistage}
Z.~Ni and S.~Paul, ``A multistage game in smart grid security: A reinforcement learning solution,'' \emph{IEEE transactions on neural networks and learning systems}, vol.~30, no.~9, pp. 2684--2695, 2019.

\bibitem{cunningham2022deep}
J.~D. Cunningham, A.~Aved, D.~Ferris, P.~Morrone, and C.~S. Tucker, ``A deep learning game theoretic model for defending against large scale smart grid attacks,'' \emph{IEEE Transactions on Smart Grid}, vol.~14, no.~2, pp. 1188--1197, 2022.

\bibitem{umsonst2020nash}
D.~Umsonst, S.~Sarita{\c{s}}, and H.~Sandberg, ``A nash equilibrium-based moving target defense against stealthy sensor attacks,'' in \emph{2020 59th IEEE Conference on Decision and Control (CDC)}.\hskip 1em plus 0.5em minus 0.4em\relax IEEE, 2020, pp. 3772--3778.

\bibitem{kundur2007power}
P.~Kundur, ``Power system stability,'' \emph{Power system stability and control}, vol.~10, pp. 7--1, 2007.

\bibitem{maiti2023targeted}
S.~Maiti, A.~Balabhaskara, S.~Adhikary, I.~Koley, and S.~Dey, ``Targeted attack synthesis for smart grid vulnerability analysis,'' in \emph{Proceedings of the 2023 ACM SIGSAC Conference on Computer and Communications Security}, 2023, pp. 2576--2590.

\bibitem{lyon2009nmap}
G.~F. Lyon, \emph{Nmap network scanning: The official Nmap project guide to network discovery and security scanning}.\hskip 1em plus 0.5em minus 0.4em\relax Insecure, 2009.

\bibitem{zapala2000jensen}
A.~M. Zapa{\l}a, ``Jensen’s inequality for conditional expectations in banach spaces,'' 2000.

\bibitem{goktas2022zero}
D.~Goktas, S.~Zhao, and A.~Greenwald, ``Zero-sum stochastic stackelberg games,'' \emph{Advances in Neural Information Processing Systems}, vol.~35, pp. 11\,658--11\,672, 2022.

\bibitem{GIT}
``{Adaptive Protection System for Smart Grid},'' \url{https://anonymous.4open.science/r/Adaptive-Protection-System-for-Smart-Grids-8247/README.md}.

\end{thebibliography}

\end{document}